\newtheorem{definition}{Definition}
\newtheorem{theorem}{Theorem}
\newtheorem{lemma}{Lemma}
\newtheorem*{proof}{Proof}
\begin{document}

\newcommand{\ketbra}[2]{| #1\rangle \langle #2|}
\newcommand{\ket}[1]{| #1 \rangle}
\newcommand{\bra}[1]{\langle #1 |}
\newcommand{\Tr}{\mathrm{Tr}}
\newcommand\F{\mbox{\bf F}}
\newcommand{\h}{\mathcal{H}}

\newcommand{\PSD}{\textup{PSD}}

\newcommand{\C}{\mathbb{C}}
\newcommand{\X}{\mathcal{X}}
\newcommand{\Y}{\mathcal{Y}}
\newcommand{\Z}{\mathcal{Z}}
\newcommand{\sspan}{\mathrm{span}}
\newcommand{\kb}[1]{\ket{#1} \bra{#1}}
\newcommand{\pos}{D}

\newcommand{\thmref}[1]{\hyperref[#1]{{Theorem~\ref*{#1}}}}
\newcommand{\lemref}[1]{\hyperref[#1]{{Lemma~\ref*{#1}}}}
\newcommand{\corref}[1]{\hyperref[#1]{{Corollary~\ref*{#1}}}}
\newcommand{\eqnref}[1]{\hyperref[#1]{{Equation~(\ref*{#1})}}}
\newcommand{\claimref}[1]{\hyperref[#1]{{Claim~\ref*{#1}}}}
\newcommand{\remarkref}[1]{\hyperref[#1]{{Remark~\ref*{#1}}}}
\newcommand{\propref}[1]{\hyperref[#1]{{Proposition~\ref*{#1}}}}
\newcommand{\factref}[1]{\hyperref[#1]{{Fact~\ref*{#1}}}}
\newcommand{\defref}[1]{\hyperref[#1]{{Definition~\ref*{#1}}}}
\newcommand{\exampleref}[1]{\hyperref[#1]{{Example~\ref*{#1}}}}
\newcommand{\hypref}[1]{\hyperref[#1]{{Hypothesis~\ref*{#1}}}}
\newcommand{\secref}[1]{\hyperref[#1]{{Section~\ref*{#1}}}}
\newcommand{\chapref}[1]{\hyperref[#1]{{Chapter~\ref*{#1}}}}
\newcommand{\apref}[1]{\hyperref[#1]{{Appendix~\ref*{#1}}}}
\newcommand\rank{\mbox{\tt {rank}}\xspace}
\newcommand\prank{\mbox{\tt {rank}$_{\tt psd}$}\xspace}
\newcommand\alice{\mbox{\sf Alice}\xspace}
\newcommand\bob{\mbox{\sf Bob}\xspace}
\newcommand\pr{\mbox{\bf Pr}}
\newcommand\av{\mbox{\bf{\bf E}}}
\newcommand{\pabxy}{(p(ab|xy))}
\newcommand{\calQ}{\mathcal{Q}}
\def\be{\begin{equation}}
\def\ee{\end{equation}}

\newcommand{\comment}[1]{{}}
\newcommand{\blue}[1]{\textcolor{blue}{#1}}
\newcommand{\red}[1]{\textcolor{red}{#1}}

\title{\vspace{-1cm} Analytic Semi-device-independent Entanglement Quantification for Bipartite Quantum States}

\author{Zhaohui Wei$^{}$}\email{Email: weizhaohui@gmail.com}
\author{Lijinzhi Lin$^{}$}
\affiliation{$^{}$Center for Quantum
Information, Institute for Interdisciplinary
Information Sciences, Tsinghua University, Beijing 100084, China}

\begin{abstract}
We define a property called nondegeneracy for Bell inequalities, which describes the situation that in a Bell setting, if a Bell inequality and involved local measurements are chosen and fixed, any quantum state with a given dimension and
its orthogonal quantum state cannot violate the inequality remarkably at the same time. By choosing a proper nondegenerate Bell inequality, we prove that for a unknown bipartite quantum state of a given dimension, based on the measurement statistics only, we can provide an analytic lower bound for the entanglement of formation or even for the distillable entanglement, making the whole process semi-device-independent. We characterize the mathematical structure of nondegenerate Bell inequalities, and prove that quite a lot of well-known Bell inequalities are nondegenerate. We demonstrate our approach by quantifying entanglement for qutrit-qutrit states based on their violation to the CGLMP inequality.
\end{abstract}

\maketitle

\section{Introduction}

It has been well-known that entanglement is a major computational resource in quantum information processing and quantum communication tasks, thus certifying entanglement for a unknown quantum system reliably in quantum labs is a fundamental and important problem. For small quantum systems tomography is a possible solution \cite{CN97,PCZ97}, but as the problem size grows, the cost of tomography goes up exponentially, making this approach infeasible. In this case, one can instead use the idea of entanglement witness to detect entanglement \cite{GT09}, but one drawback of this approach is that the knowledge on quantum dimension and the accurate measurement implementations must be given, which are often unpractical, otherwise the results may not be reliable \cite{RFB+12}.

To overcome this problem, it turns out that the approach of device-independence, a method that was first introduced in the area of quantum key distribution \cite{Ekert91,BHK05,AGM06} and self-testing \cite{PR92,MY98}, is very helpful. In this approach, all involved quantum devices are regarded as black boxes and quantum tasks like entanglement certification are usually accomplished by checking the existence of Bell nonlocality, i.e., a violation to some Bell inequality that any classical systems cannot make \cite{Bell64}. Particularly, this approach has been utilized extensively to certify the existence of genuine multipartite entanglement \cite{CGP+02,BGLP11,PV11,MBL+13,MRMT16,BCWA17,TARGB18,ZDBS19}. Since nontrivial and reliable conclusions can be drawn from limited measurement data only, device-independence is highly valuable experimentally. Moreover, for the situations that partial reliable information on the target quantum systems is known, people add some modest assumptions to fully device-independent quantum models, resulting in measurement-device-independent \cite{LCQ12,BP12} and semi-device-independent scenarios \cite{LVB11,MG12}.

A further step from entanglement certification is the quantification of entanglement in quantum labs \cite{IMP+15,BEJ+19,LJF+18}. In order to provide reliable results, device-independent schemes for quantifying entanglement have also been proposed. Inspired by the NPA method \cite{NPA07}, a device-independent method to lower bound the negativity was provided in \cite{MBL+13}. Using the concept of semiquantum nonlocal games introduced in \cite{Buscemi12}, %where the classical inputs in conventional Bell experiments are replaced by trusted quantum inputs, and the
%measurement devices are still regarded as black boxes,
a measurement-device-independent approach to quantify negative-partial-transposition entanglement has also been reported \cite{SHR17}. Usually, this kind of works face two inevitable difficulties. First, nonlocality and entanglement are known as two different resources for quantum information processing \cite{MS07}, profoundly making quantifying entanglement in a device-independent way challenging. Second, the mathematical structures of sets of quantum correlations is very complicated \cite{BCP+14,GKW+18,DW15,Slof17}, for example accurate Tsirelson bounds are often notoriously hard to find out, which makes it quite hard to study most device-independent quantum tasks in an analytical way, especially when the dimension is high. As a consequence, in most cases of device-independent quantum tasks one needs to perform costly numerical calculations \cite{NPA07}. Therefore, despite these encouraging progresses, in order to gain deeper understanding for the fundamental relations between nonlocality and entanglement measures, especially those standard entanglement measures with clear operational meanings, direct \emph{analytical} results for general cases of Bell experiments are highly demanded.

In this paper, for a general unknown bipartite quantum state, we provide an analytic method to quantify the entanglement of formation or the distillable entanglement, two of the most well-known standard entanglement measures, in a semi-device-independent manner, where besides the measurement statistics data, the only assumption we make is quantum dimension. The main idea behind our approach is a new property called nondegeneracy we define for Bell inequalities. Basically, in a Bell setting, if any quantum state $\ket{\psi}$ of a given dimension and any quantum state orthogonal to $\ket{\psi}$ cannot violate the inequality remarkably at the same time by using the same set of local measurements, we say the Bell inequality is nondegenerate. By looking into the mathematical structure of nondegenerate Bell inequalities, we prove that a lot of well-known Bell inequalities are nondegenerate, including the CHSH inequality, the $I_{3322}$ inequalty, and the CGLMP inequalities. Actually we conjecture that most of Bell inequalities satisfy this property. By choosing nondegenerate Bell inequalities, we prove that a fundamental relation between Bell inequality violations and the entanglement measures can be built, eventually giving the analytic result we want. We demonstrate the applications of our approach by applying the CGLMP inequalities on qutrit-qutrit quantum states, and specific examples show that nontrivial lower bound for the entanglement measures can be obtained as long as the violation is sufficient.

\section{Nondegenerate Bell Inequalities}

In a two-party Bell experiment, Alice and Bob located at different places share a physical system, and perform local measurements on their own subsystems without communications.
Specifically, Alice (Bob) has a set of measurement apparatus labelled by a finite set $X$ ($Y$), and the set of possible measurement outcomes are labelled by a finite set $A$ ($B$). When the experiment begins, they choose random apparatuses
to measure the system and repeat the whole process many times. By recording the frequency of every choices and corresponding outcomes, they calculate the joint probability distribution $p(ab|xy)$, indicating the probability of obtaining outcomes
$a\in A$ and $b\in B$ when choosing measurement apparatuses $x\in X$ and $y\in Y$. The collection of all $|A\times B\times X\times Y|$ joint probability distributions can be written as a vector $p:=\{p(ab|xy)\}$, called a correlation.

The set of correlations depends heavily on the physical laws that the system that Alice and Bob share obeys. If the experiment is purely classical, all the correlations they are able to produce are local correlations, which can be explained by sharing a public randomness before the experiment begins, and then generating local distributions with respect to the distribution of the public randomness, which is called a local hidden variable (LHV) model. On the other hand, if what they share beforehand is a quantum state $\rho$, the correlation is called quantum and can be written as,
\begin{equation}
p(ab|xy)=\Tr(\rho(M_x^a\otimes M_y^b)),
\end{equation}
where $M_x^a$ and $M_y^b$ are the measurement operators of the measurement apparatuses $x$ and $y$.

A major discovery of quantum mechanics is that there exist quantum correlations that cannot be produced with LHV models, which can be explained by the concept of Bell inequalities \cite{Bell64}. A typical Bell inequality can be expressed as
\begin{equation}
I:=\sum_{abxy}s_{abxy}p(ab|xy)\leq C_l,
\end{equation}
where $s_{abxy}$ are normally real coefficients, and $C_l$ is the maximal value of the Bell expression $I$ that local correlations achieve. It turns out that in some cases the maximal value of $I$ that quantum correlations achieve, called Tsirelson bound and denoted $C_q$, can be strictly larger than $C_l$, revealing the profound discovery we just mentioned.

In this paper, we define and focus on a special case of Bell inequalities called {\em nondegenerate}. We will show that for a unknown bipartite quantum state $\rho$, this property makes it possible to obtain analytic results on the entanglement of formation, denoted $E_f(\rho)$, by utilizing the measurement statistics data only, assuming the quantum dimension is known.

For convenience, we denote the Bell expression of the correlation generated by measuring a quantum state $\rho$ acting on Hilbert space $\h^{d} \otimes \h^{d}$ with measurements $\{M_x^a\}$ and $\{M_y^b\}$ as $I(\rho,M_x^a,M_y^b)$. Then we have the following definition.
\begin{definition}
A Bell inequality $I\leq C_l$ is {\em nondegenerate}, if there exists two real number $0\leq\epsilon_1<\epsilon_2\leq C_q$, such that for any pure state $\ket{\psi}$ acting on $\h^{d} \otimes \h^{d}$ and any measurements $\{M_x^a\}$ and $\{M_y^b\}$,
\begin{equation*}
I(\ket{\psi}\bra{\psi},M_x^a,M_y^b)\geq C_q-\epsilon_1
\end{equation*}
always implies that
\begin{equation*}
I(\ket{\psi^\bot}\bra{\psi^\bot},M_x^a,M_y^b)\leq C_q-\epsilon_2,
\end{equation*}
where $\ket{\psi^\bot}$ is any pure state orthogonal to $\ket{\psi}$.
\end{definition}

Intuitively, the nondegeneracy of a Bell inequalities means that if a quantum state makes a large violation to the Bell inequality, any orthogonal quantum state cannot with the involved measurements {\em unchanged}.

A few remarks on this definition are in order. First, nondegeneracy is meaningful only when the dimension is given, as any Bell inequality cannot satisfy the definition if extra dimensions can be introduced freely in the form of ancillary subsystems. Second, note that in some device-independent quantum tasks like self-testing \cite{PR92,BMR92,MY98,CGS17}, a crucial issue is whether the maximal violation to a Bell inequality is achieved by multiple pure quantum states, where the involved measurement sets can be essentially different. For convenience in this case we say this Bell inequality enjoys the uniqueness property. We stress that the nondegeneracy property we define is much weaker than the uniqueness property. After all, in principle it is possible that two close but essentially different quantum pure states achieve the maximal violation at the same time, but they are using different measurements, thus still satisfy the definition of nondegeneracy. Usually it is notoriously hard to determine whether or not a given Bell inequality has the uniqueness property. Therefore, a looser requirement in the definition may make it much easier to certify the nondegeneracy property, which potentially results in wider applications of this new definition. Actually, later we will see that quite a lot of well-known Bell inequalities are nondegenerate with dimension restricted. Third, another issue worth pointing out is that, though for simplicity we mainly focus on linear forms of Bell inequalities in this paper, nondegeneracy can be defined on general Bell inequalities of any forms.

\section{Principal component analysis}

Before proving that nondegenerate Bell inequalities do exist, let us see that in Bell experiments, the property of nondegeneracy can provide useful information on the purity of the underlying shared quantum states.

Suppose in a Bell experiment, a quantum correlation $p(ab|xy)$ is produced by measuring a bipartite quantum state $\rho$ of dimension $d\times d$, where the involved measurements are $\{M_x^a\}$ and $\{M_y^b\}$. %Then the main task in this paper is, how can we quantify the amount of entanglement of $\rho$ in an analytic way? In this paper, we choose the entanglement measure as the entanglement of formation $E_f(\rho)$, one of the most well-known standard entanglement measure.
We suppose there exists a nondegenerate Bell inequality $I\leq C_l$ with parameters $\epsilon_1$ and $\epsilon_2$ such that the Bell expression given by $p(ab|xy)$ is larger than $C_q-\epsilon_1$, that is,
\begin{equation}\label{eq:assumption}
I(\rho,M_x^a,M_y^b)\geq C_q-\epsilon_1.
\end{equation}

Intuitively, if $\epsilon_1$ is very small, then usually the quantum state $\rho$ that produces $p(ab|xy)$ is very close to a pure state that maximizes $I(\rho,M_x^a,M_y^b)$. Actually, this intuition can be captured well using the concept of nondegeneracy.
%The main task of the current paper is to estimate the entanglement of formation of $\rho$. For this, the first step of our semi-device-independent entanglement quantification for $\rho$ is testing its principal component, i.e., the component in an orthogonal decomposition with the largest violation to the Bell inequality.
Let an orthogonal decomposition of $\rho$ be $\rho=\sum_{i=1}^{d^2}a_i\ket{\psi_i}\bra{\psi_i}$. Since for fixed local measurements the Bell expression is linear in the shared quantum state, there must be a $\ket{\psi_i}$ such that $I(\ket{\psi_i}\bra{\psi_i},M_x^a,M_y^b)\geq C_q-\epsilon_1$. Without loss of generality, we suppose $i=1$. Then it holds that
\begin{eqnarray*}
I(\rho,M_x^a,M_y^b)&=&\sum_{i=1}^{d^2}a_i\cdot I(\ket{\psi_i}\bra{\psi_i},M_x^a,M_y^b)\\
&\leq & a_1\cdot I(\ket{\psi_1}\bra{\psi_1},M_x^a,M_y^b)\\
&&+(1-a_1)(C_q-\epsilon_2)\\
&\leq & a_1\cdot C_q+(1-a_1)(C_q-\epsilon_2),
\end{eqnarray*}
where we have used the definition of nondegenerate Bell inequality and the fact that the maximal Bell expression for quantum correlations is $C_q$.

Combining the above inequality with Eq.\eqref{eq:assumption}, we immediately have that
\[
a_1\geq1-\epsilon_1/\epsilon_2.
\]
Therefore, if $\epsilon_1/\epsilon_2\ll1$, the nondegeneracy guarantees that violating the Bell inequality almost maximally means that the involved quantum state $\rho$ must be close to pure, as the purity of $\rho$ can be lower bounded by
\[
\Tr(\rho^2)=\sum_{i=1}^{d^2}a_i^2\geq \left(1-\frac{\epsilon_1}{\epsilon_2}\right)^2+\frac{(\epsilon_1/\epsilon_2)^2}{d^2-1}.
\]
In this case, the principal component of $\rho$ we define is also the component of $\rho$ with the largest weight in the orthogonal decomposition.

\section{The certification of nondegeneracy}

We now show that the concept of nondegeneracy Bell inequalities is well-defined, and a lot of well-known Bell inequalities are indeed nondegenerate. In fact, we conjecture that most Bell inequalities satisfy this definition.

			Consider a Bell scenario over finite setting sets $\mathcal{X}$, $\mathcal{Y}$ and finite outcome sets $\mathcal{A}$, $\mathcal{B}$.
			Then a {Bell expression} $I$ is a function from the set of bipartite states to $\mathbb{R}$
			admitting the form
			\begin{align*}
				I(\rho_{AB},M_x^a,M_y^b)= & \sum_{abxy} s_{abxy}\mathrm{Tr}((M_x^a\otimes M_y^b)\rho_{AB}),
			\end{align*}
			where $s_{abxy}\in\mathbb{R}$, $a\in \mathcal{A}$, $b\in \mathcal{B}$, $x\in \mathcal{X}$, $y\in\mathcal{Y}$, and $\{M_x^a\}$ and $\{M_y^b\}$ are POVMs on $d$-dimensional quantum subsystems $A$ and $B$, respectively.
			In particular, for a pure state $\ket{\psi}_{AB}$, if we let
\[
H(M_x^a,M_y^b)=\sum_{abxy} s_{abxy}M_x^a\otimes M_y^b,
\]
then we have
			\begin{align*}
				I(\ket{\psi}_{AB}\bra{\psi}_{AB},M_x^a,M_y^b)= \bra{\psi}_{AB}H(M_x^a,M_y^b)\ket{\psi}_{AB}.
			\end{align*}
		Since $H(M_x^a,M_y^b)$ is Hermitian, it has $d^2$ real eigenvalues, and we now denote them by $\lambda_1(H(M_x^a,M_y^b))\geq \cdots\geq \lambda_{d^2}(H(M_x^a,M_y^b))$. Furthermore, we define
			\begin{align*}
				C(I,d,t)= & \max_{\{M_x^a,M_y^b\}}\sum\limits_{i=1}^{t}\lambda_i(H(M_x^a,M_y^b)).
			\end{align*}
Then it is not hard to see that $C_q=C(I,d,1)$.

%We now consider the problem of determining the nondegeneracy parameters of a given Bell quantity $I$.
%		It is evident from the definition of when $M$ and $N$ are given a priori, a Bell quantity can be regarded as a quadratic function of bipartite pure states;
%		that is, there exists a Hermitian operator $H(M,N)$ satisfying $\bra{\psi}|H(M,N)|\ket{\psi}=I(\ket{\psi}\bra{\psi},M,N)$ for all bipartite pure state $\ket{\psi}$.
%		Since $H$ has real eigenvalues, we make the following defintion:
%		\begin{definition}
%			Let $d$ be the dimension of system $A$ and $B$.
%			Let $\lambda_1(H)\geq \cdots\geq \lambda_{d^2}(H)$ be the eigenvalues (accounting for multiplicities) of $H$.
%			Then, for Bell quantity $I$ and integer $k\in[d^2]$, define
%			\begin{align*}
%				C(I,d,t)= & \max\limits_{M,N}\sum\limits_{i=1}^{t}\lambda_i(H(M,N)).
%			\end{align*}
%			In particular, we hve $C_q=C(I,d,1)$.
%		\end{definition}

		We now show that there is a simple relation between $C(I,d,k)$ and nondegeneracy of $I$, as shown in the following lemma.

		\begin{lemma}\label{lemma:parameter}
			For any bipartite quantum system of dimension $d\times d$, a Bell expression $I$ is nondegenerate with $0\leq\epsilon_1<\epsilon_2$ if and only if $C(I,d,2)<2C(I,d,1)$.
		\end{lemma}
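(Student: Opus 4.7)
The plan is to reformulate nondegeneracy in terms of the spectral quantity $\lambda_1(H)+\lambda_2(H)$ by invoking the Ky Fan maximum principle, which states that for any Hermitian $H$ on $\h^d\otimes\h^d$ and any orthonormal pair $\ket{\psi},\ket{\psi^\bot}$ one has $\bra{\psi}H\ket{\psi}+\bra{\psi^\bot}H\ket{\psi^\bot}\leq\lambda_1(H)+\lambda_2(H)$, with equality when the pair is a pair of top eigenvectors. This identifies $C(I,d,2)$ as precisely the maximum of $\bra{\psi}H\ket{\psi}+\bra{\psi^\bot}H\ket{\psi^\bot}$ taken over all measurement choices and all orthonormal pairs, which makes both directions of the equivalence almost transparent.

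For the ``if'' direction I would set $\delta:=2C_q-C(I,d,2)>0$ and argue directly: for any measurements and any orthonormal $\ket{\psi},\ket{\psi^\bot}$, Ky Fan yields $\bra{\psi^\bot}H\ket{\psi^\bot}\leq C(I,d,2)-\bra{\psi}H\ket{\psi}$, so whenever $\bra{\psi}H\ket{\psi}\geq C_q-\epsilon_1$ we obtain $\bra{\psi^\bot}H\ket{\psi^\bot}\leq C_q-(\delta-\epsilon_1)$. Picking any $\epsilon_1\in[0,\delta/2)$ and then $\epsilon_2=\min(\delta-\epsilon_1,\,C_q)$ produces admissible nondegeneracy parameters with $\epsilon_1<\epsilon_2\leq C_q$.

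For the ``only if'' direction I would argue by contradiction: assume nondegeneracy with some $\epsilon_1<\epsilon_2$ but $C(I,d,2)=2C_q$. After establishing that this maximum is actually attained, at the maximiser one must have $\lambda_1(H)=\lambda_2(H)=C_q$, since $\lambda_1\leq C_q$ always and the two eigenvalues sum to $2C_q$. Choosing two orthonormal eigenvectors $\ket{\phi_1},\ket{\phi_2}$ for these eigenvalues and applying nondegeneracy with $\ket{\psi}=\ket{\phi_1}$ immediately yields a contradiction: the hypothesis $\bra{\phi_1}H\ket{\phi_1}=C_q\geq C_q-\epsilon_1$ is met, so the conclusion forces $\bra{\phi_2}H\ket{\phi_2}\leq C_q-\epsilon_2<C_q$, whereas in fact $\bra{\phi_2}H\ket{\phi_2}=C_q$.

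The hard part will be the attainment step in the converse: without it one only concludes $C(I,d,2)\leq 2C_q$ rather than the strict inequality that the lemma claims, and the diagonal spectral pair at a limit need not literally saturate $C_q$. I would discharge this by noting that the set of tuples $(\{M_x^a\},\{M_y^b\})$ of POVMs on a fixed $d$-dimensional space with fixed outcome cardinality is compact (each operator lies in a bounded PSD interval subject to affine normalisation constraints), and that $\lambda_1(H)+\lambda_2(H)$ is a continuous function of the measurements by standard perturbation theory for Hermitian spectra, so the supremum in the definition of $C(I,d,2)$ is indeed achieved. Everything else is minor algebraic bookkeeping, including the check that $\epsilon_2$ may be taken no larger than $C_q$.
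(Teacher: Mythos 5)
Your proposal is correct and follows essentially the same route as the paper: the ``if'' direction uses the Ky Fan bound $\bra{\psi}H\ket{\psi}+\bra{\psi^\bot}H\ket{\psi^\bot}\leq C(I,d,2)$ with the explicit parameter choice $\epsilon_2\approx 2C_q-C(I,d,2)-\epsilon_1$, and the ``only if'' direction contradicts nondegeneracy via a pair of top eigenvectors at a maximiser of $C(I,d,2)$, exactly as in the paper's argument. Your explicit compactness-and-continuity justification that the maximum defining $C(I,d,2)$ is attained is a detail the paper simply assumes when it says ``suppose POVMs maximize $C(I,d,2)$,'' so it is a welcome but not essentially different addition.
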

		\begin{proof}
			Suppose $I$ is nondegenerate with $0\leq\epsilon_1<\epsilon_2$.
			Suppose POVMs $\{M_x^a\}$ and $\{M_y^b\}$ maximize $C(I,d,2)$.
			And let $\ket{\psi_1},\ket{\psi_2}$ be the eigenstates corresponding to $\lambda_1(H(M_x^a,M_y^b))$ and $\lambda_2(H(M_x^a,M_y^b))$ respectively.
			If $C(I,d,2)=2C(I,d,1)$, then, by
			\begin{align*}
				C(I,d,2)= & \lambda_1(H(M_x^a,M_y^b))+\lambda_2(H(M_x^a,M_y^b)) \\
				= & I(\ket{\psi_1}\bra{\psi_1},M_x^a,M_y^b)+I(\ket{\psi_2}\bra{\psi_2},M_x^a,M_y^b),
			\end{align*}
			we have
			\begin{align*}
				I(\ket{\psi_1}\bra{\psi_1},M_x^a,M_y^b)=I(\ket{\psi_2}\bra{\psi_2},M_x^a,M_y^b)=C_q,
			\end{align*}
			which contradicts with nondegeneracy parameters $\epsilon_1<\epsilon_2$.

			Conversely, suppose $C(I,d,2)<2C(I,d,1)$.
			For any pair of orthogonal pure states $\ket{\psi}$, $\ket{\phi}$ and any POVMs $\{M_x^a\}$ and $\{M_y^b\}$, we have
%			\begin{align*}
%				I(\ket{\psi}\bra{\psi},M,N)+I(\ket{\phi}\bra{\phi},M,N)= & \mathrm{Tr}(H'(M,N)) \\
%				= & \lambda_1(H'(M,N))+\lambda_2(H'(M,N)),
%			\end{align*}
			\begin{align*}
				I(\ket{\psi}\bra{\psi},M_x^a,M_y^b)+I(\ket{\phi}\bra{\phi},M_x^a,M_y^b)\leq C(I,d,2).
			\end{align*}
%			where $H'$ is $H$ compressed to the space spanned by $\ket{\psi}$ and $\ket{\phi}$.
%			Then, by Cauchy's interlacing theorem, we have
%			\begin{align*}
%				\lambda_1(H'(M,N))+\lambda_2(H'(M,N))\leq \lambda_1(H(M,N))+\lambda_2(H(M,N))\leq C(I,d,2).
%			\end{align*}
%			Let
%			\begin{align*}
%				\varepsilon_1= & \frac{1}{4}(2C(I,d,1)-C(I,d,2)), \\
%				\varepsilon_2= & 3\varepsilon_1.
%			\end{align*}
			Then, if $I(\ket{\psi}\bra{\psi},M_x^a,M_y^b)\geq C_q-\epsilon_1$ for some small positive number $\epsilon_1$, it can be verified that
			\begin{align*}
				I(\ket{\phi}\bra{\phi},M_x^a,M_y^b)\leq & C(I,d,2)-I(\ket{\psi}\bra{\psi},M,N) \\
				\leq & C(I,d,2)-C_q+\epsilon_1\\
=&C_q-[2C_q-C(I,d,2)-\epsilon_1].
			\end{align*}
Therefore, if we let $\epsilon_1<C_q-\frac{1}{2}C(I,d,2)$ and $\epsilon_2=2C_q-C(I,d,2)-\epsilon_1$, then we have that $\epsilon_1<\epsilon_2$ and $I(\ket{\phi}\bra{\phi},M_x^a,M_y^b)\leq C_q-\epsilon_2$,
			which implies that $I$ is nondegenerate with parameters $\epsilon_1$ and $\epsilon_2$.
		\end{proof}
%		It should be noted that, while the theorem faithfully characterizes nondegeneracy using $C(I,d,2)$,
%		it does not preclude the possibility that having a state $\ket{\psi}$ nearly achieving the Tsirelson bound $C_q$ would force its orthogonal states to have very low Bell value,
%		which is something $C(I,d,2)$ could not characterize alone, as $C(I,d,2)$ could be achieved far from $\ket{\psi}$.

		We have two ways to compute $C(I,d,2)$.
		One way is to directly maximize $C(I,d,2)$ as a convex function over the convex set of POVMs.
		%This might be slightly easier than the most general optimization problems,
The maximum value is achieved at the extreme points, that is, the extremal POVM, as pointed out in \cite{DPP05}.
		Additionally, according to \cite{DPP05}, for the case of qubits ($d=2$), extremal POVM have outcome numbers at most 4 and consist of rank 1 projectors when the number of outcomes is at least 2, further simplifying the optimization.

		We next introduce another way to estimate $C(I,d,2)$, which resorts to $C(I,d-1,1)$, i.e., the computation of another Tsirelson bound with a smaller dimension. For this, we first show that, in a $d\times d$ bipartite system, we can ``extract'' a state with Schmidt number at most $d-1$ by linearly combining two states with Schmidt number exactly $d$.
		\begin{lemma}
			\label{lemma:comb}
			Let $\ket{\psi}$, $\ket{\phi}$ be two bipartite states in $d\times d$ dimensional system.
			If both of $\ket{\psi}$ and $\ket{\phi}$ have Schmidt number $d$, then there is $\alpha,\beta\in\mathbb{C}$ with $\alpha\beta\neq 0$
			such that $\alpha\ket{\psi}+\beta\ket{\phi}$ have Schmidt number at most $d-1$.
		\end{lemma}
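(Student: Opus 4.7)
The plan is to exploit the standard identification of bipartite pure states with matrices, under which the Schmidt number becomes the matrix rank. Concretely, fixing a product basis, write $\ket{\psi} = \sum_{ij} (M_\psi)_{ij} \ket{i}\ket{j}$ and similarly for $\ket{\phi}$, so that $\ket{\psi} \mapsto M_\psi$ is a linear bijection from $\h^d \otimes \h^d$ to the space of $d \times d$ complex matrices, and the Schmidt number of $\ket{\psi}$ equals $\mathrm{rank}(M_\psi)$. Under this dictionary, the hypothesis says that $M_\psi$ and $M_\phi$ are both invertible, and the goal is to find $\alpha,\beta\in\mathbb{C}$ with $\alpha\beta\neq 0$ such that $\alpha M_\psi + \beta M_\phi$ has rank at most $d-1$, i.e.\ is singular.

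Next, I would normalize by setting $\beta = 1$ (allowed because we are free to rescale both coefficients by a nonzero constant), and look for $\alpha \neq 0$ with $\det(\alpha M_\psi + M_\phi) = 0$. Since $M_\psi$ is invertible, this is equivalent to
\begin{equation*}
\det\bigl(M_\psi^{-1} M_\phi - (-\alpha) I\bigr) = 0,
\end{equation*}
which says exactly that $-\alpha$ is an eigenvalue of the $d \times d$ complex matrix $M_\psi^{-1} M_\phi$. By the fundamental theorem of algebra, such an eigenvalue $t_0 \in \mathbb{C}$ exists.

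The one step that actually needs attention is ensuring $\alpha \neq 0$, since the statement requires $\alpha\beta \neq 0$. But $\alpha = 0$ would force $t_0 = 0$, meaning $M_\psi^{-1} M_\phi$ is singular; this is impossible because $M_\psi^{-1}$ and $M_\phi$ are both invertible, so their product is invertible and all its eigenvalues are nonzero. Thus every eigenvalue $t_0$ yields a valid $\alpha = -t_0 \neq 0$, and the corresponding linear combination $\alpha\ket{\psi} + \ket{\phi}$ has matrix rank $< d$, hence Schmidt number at most $d-1$, completing the proof. The argument is short and I do not anticipate a significant obstacle; the only subtle point, as noted, is checking that the eigenvalue we pick is nonzero, which follows immediately from the Schmidt-number-$d$ hypothesis on both states.
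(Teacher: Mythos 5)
Your proof is correct and follows essentially the same route as the paper: identify states with $d\times d$ coefficient matrices (Schmidt number $=$ rank), reduce to making $\alpha M_\psi + M_\phi$ singular, and pick $-\alpha$ as an eigenvalue of the invertible matrix $M_\psi^{-1}M_\phi$, whose nonzero eigenvalues guarantee $\alpha\beta\neq 0$. The only difference is cosmetic (you fix $\beta=1$ and argue via the determinant, while the paper factors $\alpha A+\beta B=\beta A(\gamma I + A^{-1}B)$), and your explicit check that the chosen eigenvalue is nonzero is, if anything, slightly more careful than the paper's phrasing.
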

		\begin{proof}
			For any state $\ket{\varphi}=\sum_{ij}a_{ij}\ket{i}\otimes\ket{j}$, we transform it into a $d\times d$ matrix with $(i,j)$-th entry equal to $a_{ij}$.
			We transform $\ket{\psi}$ and $\ket{\phi}$ into $A$ and $B$ in this fashion, respectively.
			Then both $A$ and $B$ have full rank, that is, rank $d$.

			The linear combination of $A$ and $B$ reads
			\begin{align*}
				\alpha A+\beta B= & A(\alpha I+\beta A^{-1}B).
			\end{align*}
			Let $C=A^{-1}B$; then $C$ has full rank as well.
			By assumingt that $\beta\neq 0$, we can write
			\begin{align*}
				\alpha A+\beta B= & \beta A(\gamma I+C),
			\end{align*}
			where $\alpha/\beta=\gamma\in\mathbb{C}$ is arbitary.
			Since $C$ is a complex matrix, it has a nonzero eigenvalue $\lambda$; that is, $C-\lambda I$ is of rank at most $d-1$.
			By picking $\gamma=-\lambda$, the resulting linear combination $\alpha A+\beta B$ has rank at most $d-1$,
			so the Schmidt number of $\alpha \ket{\psi}+\beta \ket{\phi}$ is at most $d-1$ as well.
		\end{proof}

Then we have the following characterization of nondegeneracy for Bell inequalities.
		\begin{theorem}
			\label{theo:dimwit}
			Let $I$ be a Bell expression and $d> 1$.
			If $C(I,d,1)>C(I,d-1,1)$, then $I$ is nondegenerate.
		\end{theorem}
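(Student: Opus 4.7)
The plan is to combine \lemref{lemma:parameter} with \lemref{lemma:comb} to prove the contrapositive: if $I$ is \emph{not} nondegenerate, then $C(I,d,1)\leq C(I,d-1,1)$.

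First, I would use \lemref{lemma:parameter} to translate nondegeneracy into the numerical condition $C(I,d,2)<2C(I,d,1)$. Since $\lambda_1(H)\geq\lambda_2(H)$ for any choice of measurements, the trivial bound $C(I,d,2)\leq 2C(I,d,1)$ always holds, so the failure of nondegeneracy is equivalent to the equality $C(I,d,2)=2C(I,d,1)$. In that case, there exist POVMs $\{M_x^a\},\{M_y^b\}$ for which the Hermitian operator $H(M_x^a,M_y^b)$ has its top two eigenvalues both equal to $C_q=C(I,d,1)$. Let $V$ be the eigenspace of $H$ associated with eigenvalue $C_q$; by construction $\dim V\geq 2$, and every unit vector $\ket{\chi}\in V$ satisfies $I(\ketbra{\chi}{\chi},M_x^a,M_y^b)=C_q$.

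The next step, and the central point, is to exhibit a unit vector $\ket{\chi}\in V$ whose Schmidt number is at most $d-1$. If some eigenbasis element of $V$ already has Schmidt number $\leq d-1$, we are done. Otherwise, we may pick two orthogonal eigenstates $\ket{\psi_1},\ket{\psi_2}\in V$, both of Schmidt number exactly $d$, and apply \lemref{lemma:comb} to find $\alpha,\beta\in\mathbb{C}$ with $\alpha\beta\neq 0$ such that $\alpha\ket{\psi_1}+\beta\ket{\psi_2}$ has Schmidt number at most $d-1$. Since $V$ is a linear subspace, this combination still lies in $V$, so after normalization we obtain the desired $\ket{\chi}$.

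Finally, I would reduce from dimension $d$ to $d-1$ on both sides. Writing $\ket{\chi}=\sum_{i=1}^{d-1}\sqrt{\lambda_i}\ket{u_i}\ket{v_i}$ and letting $P_A=\sum_i\ketbra{u_i}{u_i}$, $P_B=\sum_i\ketbra{v_i}{v_i}$ be the projectors onto the $(d-1)$-dimensional local supports, I define compressed operators $\widetilde M_x^a=P_AM_x^aP_A$ and $\widetilde M_y^b=P_BM_y^bP_B$, which are valid POVMs on the $(d-1)$-dimensional subspaces because $\sum_a\widetilde M_x^a=P_A$ acts as the identity on $V_A=\mathrm{range}(P_A)$, and likewise for $B$. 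Since $(P_A\otimes P_B)\ket{\chi}=\ket{\chi}$, the Bell expression is unchanged, giving $I(\ketbra{\chi}{\chi},\widetilde M_x^a,\widetilde M_y^b)=C_q$, hence $C(I,d-1,1)\geq C_q=C(I,d,1)$, contradicting the theorem's hypothesis.

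The main obstacle is the third paragraph: ensuring that the Schmidt-reduced state actually inherits the maximal Bell value. The observation that $V$ is a genuine \emph{eigenspace} (not just a span of two near-maximal eigenvectors) makes this work cleanly; then \lemref{lemma:comb} supplies the algebraic input, and the POVM-compression step is routine.
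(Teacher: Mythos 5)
Your proposal is correct, and it runs on the same machinery as the paper's proof: \lemref{lemma:parameter} to turn nondegeneracy into a statement about $C(I,d,2)$, \lemref{lemma:comb} to produce a Schmidt-rank-deficient combination of two top eigenstates, and compression of the POVMs onto a $(d-1)\times(d-1)$ system to contradict $C(I,d,1)>C(I,d-1,1)$. The logical packaging differs, though. You argue the contrapositive, so failure of nondegeneracy gives, at a maximizing POVM (attainment holds by compactness of the POVM set, a point the paper also uses implicitly), an exactly degenerate top eigenvalue; working inside the genuine eigenspace $V$ makes the step ``the linear combination still achieves the maximal value'' automatic. The paper instead proves the quantitative inequality $C(I,d,2)\leq C(I,d,1)+C(I,d-1,1)$ directly, which requires a little more work (both near-top eigenstates must exceed $C(I,d-1,1)$, hence have Schmidt number $d$, and the cross terms vanish because they are eigenstates of $H$), but in return it feeds into the proof of \lemref{lemma:parameter} to give explicit nondegeneracy parameters: the gap $2C_q-C(I,d,2)$ is bounded below by the dimension-witness gap $C(I,d,1)-C(I,d-1,1)$, so one can actually choose $\epsilon_1,\epsilon_2$, which is what the paper's entanglement bounds ultimately consume. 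Your route certifies only the qualitative statement, which is all the theorem literally asserts. One cosmetic fix: if your $\ket{\chi}$ has Schmidt rank strictly less than $d-1$, take $P_A$, $P_B$ to be any $(d-1)$-dimensional projectors containing the local supports; the compression argument is unchanged.
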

		\begin{proof}
We now prove that if $C(I,d,1)>C(I,d-1,1)$, then $C(I,d,2)\leq C(I,d,1)+C(I,d-1,1)$. According to Lemma \ref{lemma:parameter}, this implies that $I$ is nondegenerate.
			Suppose $C(I,d,2)>C(I,d,1)+C(I,d-1,1)$.
			Let $\{M_x^a\}$ and $\{M_y^b\}$ be the POVMs that achieve $C(I,d,2)$.
			Then there exist two corresponding eigenstates $\ket{\psi}$, $\ket{\phi}$ satisfy
			\begin{align*}
				&I(\ket{\psi}\bra{\psi},M_x^a,M_y^b)+I(\ket{\phi}\bra{\phi},M_x^a,M_y^b)\\
=&C(I,d,2)\\
> & C(I,d,1)+C(I,d-1,1).
			\end{align*}
			By definition of $C(I,d,1)$, this means that
			\begin{align*}
				I(\ket{\psi}\bra{\psi},M_x^a,M_y^b)> & C(I,d-1,1), \\
				I(\ket{\phi}\bra{\phi},M_x^a,M_y^b)> & C(I,d-1,1),
			\end{align*}
			thus both $\ket{\psi}$ and $\ket{\phi}$ have Schmidt number at least $d$.
			For $\alpha,\beta\in\mathbb{C}$ with $|\alpha|^2+|\beta|^2=1$, we have
			\begin{align*}
				& I((\alpha\ket{\psi}+\beta\ket{\phi})(\bar{\alpha}\bra{\psi}+\bar{\beta}\bra{\phi}),M_x^a,M_y^b) \\
				= & |\alpha|^2 I(\ket{\psi}\bra{\psi},M_x^a,M_y^b)+|\beta|^2 I(\ket{\phi}\bra{\phi},M_x^a,M_y^b)\\
>& C(I,d-1,1).
			\end{align*}
			However, by Lemma \ref{lemma:comb}, there is $\alpha,\beta\in\mathbb{C}$ with $\alpha\beta\neq 0$
			such that $\ket{\varphi}=\alpha\ket{\psi}+\beta\ket{\phi}$ has Schmidt number at most $d-1$.
			By normalizing the linear combination, we can fit $\ket{\varphi}$ into a $(d-1)\times (d-1)$ dimensional system.
			Let ${M'}_x^a$, ${M'}_y^b$ be the compression of $M_x^a$ and $M_y^b$ into the reduced system.
			Then we have
			\begin{align*}
				I(\ket{\varphi}\bra{\varphi},{M'}_x^a,{M'}_y^b)>C(I,d-1,1),
			\end{align*}
			which contradicts with the definition of $C(I,d-1,1)$.
			Therefore, $C(I,d,2)\leq C(I,d,1)+C(I,d-1,1)$.
		\end{proof}

This theorem implies the following two interesting consequences. First, any Bell inequality that can be violated by a pair of qubits is nondegenerate with $0\leq\epsilon_1<\epsilon_2$.
				Indeed, when $d=1$, the system is entirely classical, hence $C(I,1,1)<C(I,2,1)$.
				In particular, the CHSH inequality is nondegenerate, and actually quite a lot of device-independent characterization of qubit-qubit states based on the CHSH inequalities have been reported \cite{SG01,BLM+09,Jed16}.

Second, any Bell expression with its Tsirelson bound strictly monotonic with respect to $d$ is nondegenerate at any dimension.
				Two well-known Bell inequalities with this property are the $I_{3322}$ inequality and the CGLMP inequality \cite{ZG08,PV10}. Therefore, both of them are nondegenerate (at least for certain dimensions).

\section{An example: quantifying entanglement with the CGLMP inequality}

We now show that the concept of nondegenerate Bell inequalities allows us to analytically quantify the entanglement of a unknown bipartite quantum state in a semi-device-independent manner. For simplicity, we will focus on the CGLMP inequality for a qutrit-qutrit quantum state $\rho$ acting on Hilbert space $\h^{3} \otimes \h^{3}$. We stress that our approach can be applied generally on quantum states of any dimension and any nondegenerate Bell inequalities.

The form of the CGLMP inequality we choose is essentially from \cite{ZG08}, which can be expressed as
		\begin{align*}
			\mathrm{P}(A_2\geq B_2)+&\mathrm{P}(B_2\geq A_1)+\\
&\mathrm{P}(A_1\geq B_1)+\mathrm{P}(B_1>A_2)\leq 3.
		\end{align*}
In \cite{ZG08}, it has been found out that when $d=3$, $C_q=C(I,3,1)=3.3050$. Through numerical simulations, we find that for qutrit-qutrit quantum states, $C(I,3,2)=6.2071$. Note that $C(I,3,2)<2\cdot C(I,3,1)$, then Lemma \ref{lemma:parameter} indicates that the CGLMP inequality is nondegenerate for $d=3$. Furthermore, the proof to Lemma \ref{lemma:parameter} provides a systematic way to choose the corresponding parameters $\epsilon_1$ and $\epsilon_2$. Therefore, if a target quantum state $\rho$ satisfies that $I(\rho,M_x^a,M_y^b)\geq C_q-\epsilon_1$, we can use the principal component analysis introduced before to obtain a lower bound for the purity of $\rho$, that is,
\[
\Tr(\rho^2)\geq \left(1-\frac{\epsilon_1}{\epsilon_2}\right)^2+\frac{(\epsilon_1/\epsilon_2)^2}{8}\equiv\gamma_{\rho}.
\]
Then according to \cite{SSY+17}, the Von Neumann entropy of $\rho$, denoted by $S(\rho)$, can be upper bounded as
\[
S(\rho)\leq -c_i\sum_{i=1}^9\log(c_i),
\]
where $c_1=\frac{1}{9}-\frac{2}{3}\sqrt{2(\gamma_{\rho}-\frac{1}{9})}$, and $c_2=\cdots=c_9=(1-c_1)/8$.

%We now use the principal component analysis above to quantify the entanglement of formation (and the entanglement of distillation) for $\rho$.

On the other hand, according to \cite{SVV15,WS17}, the purity of $\rho_A=\Tr_B(\rho)$ (or $\rho_B=\Tr_A(\rho)$) can also be upper bounded. Indeed, define
\begin{small}
\begin{equation}\label{eq:fun1}
f_1(p)=\min_{y_1,y_2} \sum_{b_1,b_2} \min_x \bigg(\sum_a \sqrt{p(ab_1|xy_1) p(ab_2|xy_2)} \bigg)^2,
\end{equation}
and
\begin{equation}\label{eq:fun2}
f_2(p)=\min_{x_1,x_2} \sum_{a_1,a_2} \min_y \bigg(\sum_b \sqrt{p(a_1b|x_1y) p(a_2b|x_2y)} \bigg)^2,
\end{equation}
\end{small}
then it holds that \cite{SVV15,WS17}
\begin{equation}
\Tr(\rho_{A}^2)\leq \min\{f_1(p),f_2(p)\}\equiv \gamma_A.
\end{equation}
Again, when $\gamma_A<1/2$, according to \cite{SSY+17} the Von Neumann entropy of $\rho_A$ can be lower bounded as
\[
S(\rho_A)\geq -f_i\sum_{i=1}^3\log(f_i),
\]
where $f_1=f_2=\frac{1-\alpha}{2}$, $f_3=\alpha$, and $\alpha=\frac{1}{3}-\sqrt{\frac{2}{3}(\gamma_{A}-\frac{1}{3})}$.

We next consider the coherent information of $\rho$ defined as \cite{SN96,Lloyd97}
\[
I_C(\rho)=S(\rho_A)-S(\rho).
\]
Clearly, our discussions above provide an analytical approach to lower bound the coherent information of $\rho$.

Importantly, it turns out that, for any bipartite quantum state $\rho$, we have that \cite{COF11}
\begin{equation}
E_f(\rho)\geq E_D(\rho)\geq I_C(\rho),
\end{equation}
where $E_D(\rho)$ is the distillable entanglement of $\rho$. Therefore, our approach actually lower bounds the entanglement of formation or even the distillable entanglement for $\rho$. Note that in addition to the measurement statistics $p(ab|xy)$, we do not need any assumption on the internal working of the quantum system or the precision of quantum operations except the system dimension $d$, which means that our quantification for $E_f(\rho)$ or $E_D(\rho)$ is of a semi-device-independent nature.

We test our approach on numerically generated qutrit-qutrit correlations, and the results are illustrated in the figure below. It can be seen that when the gap between the violation and $C_q$ is smaller than $0.07$, our method gives positive lower bound for the distillable entanglement.
\begin{figure}[htbp]
   \label{fig:example}
   \centering
   \includegraphics[width=3.25in]{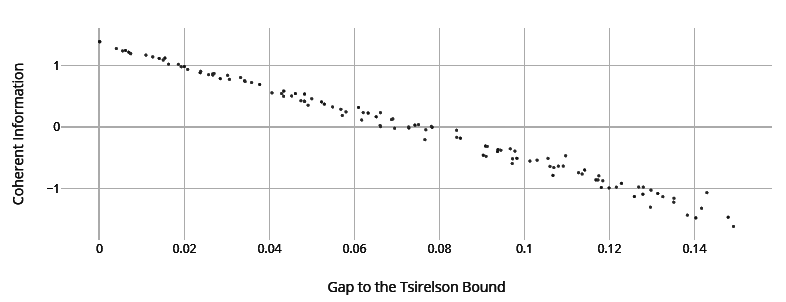}
   \caption{{Our lower bounds for the coherent information (or the distillable entanglement) based on the violations to the CGLMP inequality, where quantum correlations are generated by measuring random qutrit-qutrit states. Note that the gap between the classical bound and the Tsirelson bound is 0.3050.}
   }
\end{figure}

Lastly, we would like to point out that $E_f(\rho)$ can also be lower bounded by the following alternative way. According to \cite{WS17}, $E_f(\ket{\psi_1}\bra{\psi_1})$ can be lower bounded as the purity of $\Tr_B(\ket{\psi_1}\bra{\psi_1})$ can be upper bounded, where $\ket{\psi_1}$ is the principal component of $\rho$ we have discussed above. Then by the continuous property of the entanglement of formation proved by \cite{Nielsen00,Winter16}, we can bound the gap between $E_f(\rho)$ and $E_f(\ket{\psi_1}\bra{\psi_1})$. Combining these two results together, we can obtain a lower bound for $E_f(\rho)$. However, specific examples of quantum correlations show that our first approach is much better than the second one.

\section{Multipartite case}

In principle the approach above can be generalized to multipartite case \cite{WS19}, as the concept of nondegeneracy can also be defined naturally on multipartite Bell inequalities. But a major issue are raised in multipartite case and has to be addressed, which is the structure of multipartite entanglement is much more complicated. For example, because of the existence of Schmidt decompositions, entanglement quantification for bipartite pure states based on measurement statistics data can be achieved as addressed in \cite{WS17}, but Schmidt decompositions do not always exist for multipartite pure quantum states, thus this part has to be redeveloped carefully. Similarly, bounding coherent information or applying the continuous property of the entanglement of formation will be much more challenging in multipartite case. Anyway, we hope these problems will be discussed in future work.

\section{Conclusions}

In order to develop method that is able to quantify entanglement reliably in quantum labs, in this paper we define a property called nondegeneracy for Bell inequalities. We believe that this property is of independent interest, and provides us a new insight to study Bell inequalities, a fundamental and important tool in quantum physics and quantum information. Based on the concept of nondegenerate Bell inequalities, we propose an approach to quantify the entanglement of formation or the distillable entanglement for the shared quantum state underlying a Bell experiment in a semi-device-independent manner, which is analytic and does not rely on complicated numerical optimizations, unlike most results on device-independent quantum tasks. We also provide a mathematical characterization for nondegenerate Bell inequalities, and prove that quite a lot of well-known Bell inequalities are nondegenerate. We apply our approach on qutrit-qutrit quantum states by choosing the CGLMP inequality, and demonstrate that positive lower bound for the two entanglement measures can be obtained if the violation is sufficient. It is very likely that for a given quantum system, choosing different nondegenerate Bell inequalities will cause completely different performance. Therefore, in future work we will try to strengthen the results further. For example, the foundation of our approach is dimension test, and it usually behaves well when the optimal violation corresponds to a maximally entangled state, which may suggest that nondegenerate Bell inequalities like the one introduced in \cite{SAT+17} might be a good choice to quantify entanglement. We hope eventually our new approach can be helpful for experimentalists to quantify entanglement reliably in quantum labs.

\begin{acknowledgments}
Z.W. thanks Valerio Scarani for helpful discussions. Z.W. is supported by the National Key R\&D Program of China, Grant No. 2018YFA0306703 and the start-up funds of Tsinghua University, Grant No. 53330100118. This work has been supported in part by the Zhongguancun Haihua Institute for Frontier Information Technology.
\end{acknowledgments}

\end{document}